\definecolor{darkbrown}{rgb}{.5,.2,0}
\def\doi[#1]{\href{https://doi.org/#1}{\texttt{DOI:\,#1}}}
\def\url[#1]{\href{#1}{\texttt{URL:\,#1}}}
\setlist[enumerate]{%
% vertical
topsep=1ex plus0.5ex minus0.2ex,   %additional space above the list
parsep=.5ex plus0.3ex minus0.1ex,  %space between paragraphs of an item: parsep
itemsep=.5ex plus0.5ex minus0.2ex, %space between items: itemsep+parsep
% horizontal
labelindent=1ex,   %space to the left of the label box (specific to the enumitem package)
labelwidth=*,      %width of label box, * is standard width depending on font
labelsep=1ex,      %space between label box and text in first line
leftmargin=*,      %space to the left of the text in second, third, etc. lines
itemindent=2ex,    %indentation of first line against second line
rightmargin=0ex,   %space to the right of the text
listparindent=1ex, %indentation of first line of paragraph within an item
align=left,        %label alignment, default is right
font=\rmfamily,
font=\upshape}
\theoremstyle{plain}
\newtheorem{Lem}{Lemma}
\newtheorem{Thm}{Theorem}
\newtheorem{Cor}{Corollary}
\newtheorem{Pro}{Proposition}
\theoremstyle{definition}
\newtheorem{Rem}{Remark}
\newtheorem{Exa}{Example}
\newtheorem*{Que*}{Question}
\newcommand{\C}{\mathbb{C}}
\newcommand{\N}{\mathbb{N}}
\newcommand{\R}{\mathbb{R}}
\newcommand{\cD}{\mathcal{D}}
\newcommand{\cH}{\mathcal{H}}
\newcommand{\cK}{\mathcal{K}}
\newcommand{\fs}{\mathfrak{s}}
\newcommand{\fA}{\mathfrak{A}}
\newcommand{\fB}{\mathfrak{B}}
\newcommand{\fS}{\mathfrak{S}}
\newcommand{\fT}{\mathfrak{T}}
\DeclareMathOperator\ext{ext}
\DeclareMathOperator\Tr{Tr}
\begin{document}
\title[Extreme Points of Quantum States with Bounded Energy]
{Extreme Points of the Set of Quantum States with Bounded Energy}
\author{Stephan Weis}
\author{Maksim Shirokov}
\begin{abstract}
We show that for any energy observable every extreme point of the set of quantum
states with bounded energy is a pure state. This allows us to write every state
with bounded energy in terms of a continuous convex combination of pure states of
bounded energy. Furthermore, we prove that any quantum state with finite energy
can be represented as a continuous convex combination of pure states with the
same energy. We discuss examples from quantum information theory.
\end{abstract}
\date{January 20th, 2020}
\subjclass[2010]{47Axx,52Axx,81Qxx}
\keywords{Quantum state, energy-constraint, extreme point}
%
%
% 47 Operator theory
%
% 47Axx		General theory of linear operators
%
% 52 Convex and discrete geometry
%
% 52Axx		General convexity
%
% 81 Quantum theory
%
% 81Qxx		General mathematical topics and methods in quantum theory
%
\maketitle
%
%%%%%%%%%%%%%%%%%%%%%%%%%%%%%%%%%%%%%%%%%%%%%%%%%%%%%%%%%%%%%%%%%%%%%%%%%%%%
%%%%%%%%%%%%%%%%%%%%%%%%%%%%%%%%%%%%%%%%%%%%%%%%%%%%%%%%%%%%%%%%%%%%%%%%%%%%
%%%%%%%%%%%%%%%%%%%%%%%%%%%%%%%%%%%%%%%%%%%%%%%%%%%%%%%%%%%%%%%%%%%%%%%%%%%%
%%%%%%%%%%%%%%%%%%%%%%%%%%%%%%%%%%%%%%%%%%%%%%%%%%%%%%%%%%%%%%%%%%%%%%%%%%%%
%%%%%%%%%%%%%%%%%%%%%%%%%%%%%%%%%%%%%%%%%%%%%%%%%%%%%%%%%%%%%%%%%%%%%%%%%%%%
%
\section{Introduction}
It is a practical, realistic assumption in engineering that signaling states
in communication networks have bounded energy. For example, the light
sent through telecom fibers must be suitable for processing at all parties
in a network. The assumption of an energy constraint is also targeted at
theo\-retical concepts. An unbounded energy would lead to an infinite capacity,
the maximal amount of information a communication channel can transmit
\cite[Chapter~11]{H-SCI}.
\par
Therefore, the optimization over a set of energy-constrained quantum states
is typical in quantum information theory, see for example
\cite[Section 11.6]{H-SCI} or
\cite{B&D,
Giovannetti-etal2014,
HolevoShirokov2006,
Man, 
WildeQi2018,
W-EBN}.
Some of these optimization problems are convex optimization problems and could
be studied analytically provided we understand the convex geometry of the set of
energy-constrained quantum states. Here, we study the extreme points.
\par
In Section~\ref{sec:extreme-points}, we show that every extreme point of the set of
quantum states with bounded energy is a pure state, using the idea that every
extreme point of a hyperplane section of a convex set is a convex combination of
two extreme points of the convex set \cite{Barvinok2002}.
\par
As the set of quantum states with bounded energy is a closed, $\mu$-compact convex
set \cite{HolevoShirokov2006,P&Sh}, we are able to write every state with bounded
energy in terms of a continuous convex combination of pure states with bounded
energy (akin to Choquet's theorem) in Section~\ref{sec:decomposition}. Surprisingly,
although the convex set of states with given energy is not closed, any state in it
can be represented as a continuous convex combination of pure states with the
same energy.
\par
The above results allow us to show that the supremum of any convex function on the
set of states with bounded energy can be taken only over pure states provided that
this function is lower semicontinuous or upper semicontinuous and upper bounded.
This result simplifies essentially definitions of several characteristics used in
quantum information theory and adjacent fields of mathematical physics. These
applications are considered in Section~\ref{sec:applications}.
\par
%
%%%%%%%%%%%%%%%%%%%%%%%%%%%%%%%%%%%%%%%%%%%%%%%%%%%%%%%%%%%%%%%%%%%%%%%%%%%%
%%%%%%%%%%%%%%%%%%%%%%%%%%%%%%%%%%%%%%%%%%%%%%%%%%%%%%%%%%%%%%%%%%%%%%%%%%%%
%%%%%%%%%%%%%%%%%%%%%%%%%%%%%%%%%%%%%%%%%%%%%%%%%%%%%%%%%%%%%%%%%%%%%%%%%%%%
%%%%%%%%%%%%%%%%%%%%%%%%%%%%%%%%%%%%%%%%%%%%%%%%%%%%%%%%%%%%%%%%%%%%%%%%%%%%
%%%%%%%%%%%%%%%%%%%%%%%%%%%%%%%%%%%%%%%%%%%%%%%%%%%%%%%%%%%%%%%%%%%%%%%%%%%%
%
\section{The Extreme Points are Pure States}
\label{sec:extreme-points}
Let $\cH$ be a separable Hilbert space. The space $\fT$ of
trace-class operators on $\cH$ is a Banach space with the trace norm
$\|\cdot\|_1$. The real Banach space of self-adjoint trace-class operators
contains the convex cone $\fT^+$ of positive trace-class operators, the convex
set $\fT^1$ of positive trace-class operators with trace at most one, and the
convex set $\fS=\fS(\cH)$ of density operators, comprising the positive
trace-class operators with trace one. We call the density operators synonymously
{\em states}. Note that $\fT^1$ is the pyramid over $\fS$ with apex the zero
operator, and $\fT^+$ is the convex cone over $\fS$.
\par
It is well-known that the cone of positive trace-class operators $\fT^+$ is
closed, and that the subsets $\fS$ and $\fT^1$ are closed and bounded
\cite{Murphy1990}. Recall also that the set of extreme points $\,\ext(\fS)$ of
$\fS$ consists of the projectors of rank one, called {\em pure states}, which
may be written in the form $\ket{\psi}\!\!\bra{\psi}$ where $\psi\in\cH$ is a
unit vector. The set of extreme points $\,\ext(\fT^1)$ of $\fT^1$
comprises the pure states and the zero operator.
\par
We define an energy constraint on the cone of positive trace-class operators
$\fT^+$ using a positive, self-adjoint (possibly unbounded) operator $H$ on
$\cH$. It is well-known that there is a spectral measure $E_H$ on the Borel
$\sigma$-algebra of $[0,\infty)$ such that
\[
H=\int_0^\infty \lambda\, dE_H(\lambda),
\]
see for example Theorem 5.7 of \cite{Schmuedgen2012}. We approximate the positive
operator $H$ by the sequence $HP_n=\int_0^n \lambda\, dE_H(\lambda)$ of
bounded operators, where $P_n=\int_0^n dE_H(\lambda)$ is the spectral
projector of $H$ corresponding to $[0,n]$. We define the functional
\[
f_H: \fT^+\to[0,\infty],
\qquad
A\mapsto\Tr HA=\lim_{n\to\infty}\Tr(HP_nA).
\]
For every $E\in\R$, the set of states with expected energy at most $E$ is
\[
\fS_{H,E} = \{\rho\in \fS : \Tr H\rho\leq E\}.
\]
Let
\[
\fT^1_{H,E} = \{\rho\in \fT^1 : \Tr H\rho\leq E\}.
\]
Writing $f_H(A)=\sup_{n\in\N}\Tr(HP_nA)$, it is easy to show that $f_H$ is
lower semi-continuous. If follows that $\fS_{H,E}$ and $\fT^1_{H,E}$ are
closed convex sets.
\par
The topology of $\fS_{H,E}$ will matter in the next section where we prove the
existence of extreme points. In the present section, we show that every extreme
point is a pure state. With this aim, we now turn to general (not necessarily closed)
convex sets. Let $V$ be a real vector space and let $K\subset V$ be a convex set. A
map $f:K\to\R$ is called an {\em affine map} if
\[
f(\lambda x+\mu y)=\lambda f(x)+\mu f(y),
\qquad
x,y\in K,
\quad
\lambda,\mu\geq 0,
\quad \lambda+\mu=1.
\]
Let $f:K\to\R$ be a non-constant affine map and let $\alpha\in\R$. We call the
convex set $\{x\in K: f(x)=\alpha\}$ a {\em hyperplane section} of $K$.
\par
We slightly modify Barvinok's Lemma III.9.1 in \cite{Barvinok2002}.
The proof remains almost the same.
\par
\begin{Lem}\label{Lemma:affine}
Let $K$ be a convex set that contains no infinite rays and that contains the two
endpoints of every open segment in $K$. Let $f:K\to\R$ be a non-constant affine
map and let $\alpha\in\R$. Then every extreme point of the hyperplane section
$L=\{x\in K: f(x)=\alpha\}$ is a convex combination of at most two extreme points
of $K$.
\end{Lem}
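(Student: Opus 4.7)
The plan is to take $x \in \ext(L)$ and, assuming $x \notin \ext(K)$, exhibit $x$ as a convex combination of two extreme points of $K$ lying on a suitable line through $x$.

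Since $x \notin \ext(K)$, I would write $x = \tfrac12(y+z)$ with $y \neq z$ in $K$. Extremality of $x$ in $L$ together with affineness of $f$ forces $f(y) \neq f(z)$, for otherwise $y,z \in L$. Letting $\ell$ be the line through $y, z$, the set $K \cap \ell$ is a convex subset of $\ell$, bounded by the no-infinite-rays hypothesis; its relative interior is an open segment in $K$ whose endpoints lie in $K$ by the second hypothesis, hence $K \cap \ell = [u,v]$ for some $u,v \in K$. Since $f|_\ell$ is affine and non-constant, a brief computation shows $x$ is strictly interior to $[u,v]$ with $f(u) < \alpha < f(v)$ (up to swapping $u,v$).

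The heart of the proof is to show $u, v \in \ext(K)$. Suppose for contradiction that $u = \tfrac12(p+q)$ with $p \neq q$ in $K$. If both $p, q \in \ell$, they would lie in $[u,v]$ with midpoint the endpoint $u$, forcing $p = q = u$ and yielding a contradiction; so after relabeling $p \notin \ell$. I would then work in the affine $2$-plane $\pi$ spanned by $\ell$ and $p$, which also contains $q = 2u - p$; the set $K \cap \pi$ is genuinely $2$-dimensional since it contains the non-collinear points $u, v, p$. Because $x$ remains extreme in $L \cap \pi = (K \cap \pi) \cap \{f = \alpha\}$, it must lie on the relative boundary of $K \cap \pi$ in $\pi$: otherwise a relative neighborhood of $x$ on the level line $\{f = \alpha\} \cap \pi$ would sit inside $L \cap \pi$, contradicting extremality. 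A supporting line $m \subset \pi$ to $K \cap \pi$ at $x$ therefore exists; and since $x$ is strictly interior to the segment $[u,v] \subset K \cap \pi$, the defining linear functional of $m$ must vanish on all of $[u,v]$, forcing $m = \ell$. But then $p$ and $q = 2u - p$ lie strictly on opposite sides of $\ell$, both in $K \cap \pi$, contradicting the supporting property. Hence $u \in \ext(K)$; symmetrically $v \in \ext(K)$, and $x = (1-t)u + tv$ for some $t \in (0,1)$ is the required convex combination.

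The main obstacle is the two-dimensional supporting-line step: one has to verify that $K \cap \pi$ is genuinely $2$-dimensional (so that a non-trivial supporting line at a boundary point exists) and then argue carefully that the supporting line is forced to coincide with $\ell$, the uniqueness coming from $x$ being interior to $[u,v]$ combined with the defining functional being non-negative on $K \cap \pi$ yet vanishing at an interior point of the affine segment. The two hypotheses on $K$ enter precisely at the start, to produce the closed bounded segment $[u,v]$ whose endpoints are the candidate extreme points.
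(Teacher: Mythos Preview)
Your proof is correct and follows the same skeleton as the paper's: extend the initial segment through $x$ to the maximal segment $[u,v]=K\cap\ell$ (using both hypotheses on $K$), then show the endpoints are extreme in $K$.

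The difference lies in how the extremality of the endpoints is established. You pass to the $2$-plane $\pi$, argue that $x$ must lie on the relative boundary of $K\cap\pi$, invoke the (finite-dimensional) supporting-hyperplane theorem to obtain a supporting line, force it to coincide with $\ell$, and then derive a contradiction from $p,q$ lying on opposite sides. The paper instead works directly with the triangle $pqv$ (in its notation $abz$): since $u=\tfrac12(p+q)$ and $x$ lies strictly between $u$ and $v$, the point $x$ is in the \emph{interior} of this triangle, which is contained in $K$; intersecting the triangle with $\{f=\alpha\}$ yields either the whole triangle or a chord, and in either case $x$ is an interior point of that intersection, contradicting extremality of $x$ in $L$. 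The triangle argument is shorter and entirely self-contained, whereas your route imports the supporting-line theorem; both are valid, and your observation $f(y)\neq f(z)$ (needed so that $\{f=\alpha\}\cap\pi$ is a genuine line) is the analogue of the paper's case distinction on whether $f$ is constant on the triangle.
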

\begin{proof}
Let $x$ be an extreme point of $L$. If $x$ is an extreme point of $K$, the proof is
complete. Otherwise, there are $\widetilde{y}\neq\widetilde{z}$ in $K$ such that
$x=\tfrac{1}2(\widetilde{y}+\widetilde{z})$. Let $y,z\in K$ be the endpoints of the
maximal segment in $K$ that contains the segment $[\widetilde{y},\widetilde{z}]$,
with respect to the partial ordering by inclusion. The maximal segment exists by
the assumptions that $K$ contains no infinite rays and that the endpoints of every
open segment in $K$ lie in $K$.
\par
\begin{figure}
\includegraphics{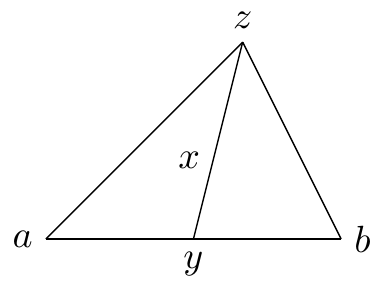}
\caption{Sketch for the proof of Lemma~\ref{Lemma:affine}.}
\label{fig:barvinok}
\end{figure}
We claim that $y$ and $z$ are extreme points of $K$. Assume that $y$ were not an
extreme point of $K$. Then there would be $a\neq b$ in $K$ such that
$y=\tfrac{1}2(a+b)$, see Figure~\ref{fig:barvinok}. As $[y,z]$ is a maximal segment
in $K$, the points $a,b,z$ are
affinely independent. Hence, $x$ is an interior point of the triangle $abz$. This
shows that either $abz\cap L=abz$ or $abz\cap L$ is a segment. In either case, $x$
is an interior point of $abz\cap L$ and this contradicts the assumption that $x$ is
an extreme point of $L$.
\end{proof}
We apply Lemma~\ref{Lemma:affine} to convex sets of positive trace-class operators on
which the functional $f_H$ has finite values. Let
\[
\fT^+_H=\{A\in\fT^+:\Tr HA<\infty\}.
\]
Clearly, $\fT^+_H$ is a convex cone and the restriction
$f_H|_{\fT^+_H}$ is an affine map. Let $\fS_H=\fS\cap\fT^+_H$ and
$\fT^1_H=\fT^1\cap\fT^+_H$.
\par
\begin{Lem}\label{lem:ExtSV}
Every extreme point of $\fS_H$ is a pure state. Every non-zero extreme point
of $\fT^1_H$ is a pure state.
\end{Lem}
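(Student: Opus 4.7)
The plan is to obtain both assertions from the affine behaviour of $f_H$ on the positive cone: the statement for $\fT^1_H$ follows from a direct transfer argument, and the statement for $\fS_H$ then drops out by applying Lemma~\ref{Lemma:affine}. I start with the (easier) assertion about $\fT^1_H$. Let $A$ be a non-zero extreme point of $\fT^1_H$. I claim $A$ is extreme already in $\fT^1$, which suffices because the non-zero extreme points of $\fT^1$ are precisely the pure states. Suppose instead $A=\tfrac12(B+C)$ with $B,C\in\fT^1$ and $B\neq C$. Additivity of $f_H$ on the cone $\fT^+$ (with values in $[0,\infty]$) yields $2f_H(A)=f_H(B)+f_H(C)$, and since $f_H(A)<\infty$ while $f_H(B),f_H(C)\geq 0$, both $f_H(B)$ and $f_H(C)$ are finite. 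Hence $B,C\in\fT^1_H$, contradicting extremality of $A$ in $\fT^1_H$.

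For the assertion about $\fS_H$, I apply Lemma~\ref{Lemma:affine} with $K=\fT^1_H$, the non-constant affine map $\rho\mapsto\Tr\rho$, and $\alpha=1$, so that the hyperplane section is $L=\fS_H$. The two hypotheses on $K$ are verified as follows. Since $\fT^1_H\subset\fT^1$ is bounded in trace norm, $K$ contains no infinite rays. For the endpoint hypothesis, if $(y,z)$ is an open segment contained in $\fT^1_H$, then $y,z\in\fT^1$ by closedness of $\fT^1$ in the trace norm, and the identity $f_H((1-t)y+tz)=(1-t)f_H(y)+tf_H(z)$ for $t\in(0,1)$, combined with finiteness of the left-hand side and non-negativity of $f_H$, forces $f_H(y),f_H(z)<\infty$; hence $y,z\in\fT^1_H$.

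Lemma~\ref{Lemma:affine} then writes every extreme point $\rho$ of $\fS_H$ as $\rho=\lambda p+(1-\lambda)q$ with $\lambda\in[0,1]$ and $p,q\in\ext(\fT^1_H)$. By the first step, each of $p,q$ is either a pure state or zero. The condition $\Tr\rho=1$ together with $\Tr p,\Tr q\in\{0,1\}$ excludes $p=q=0$ and, in the case where exactly one of $p,q$ is zero, forces the coefficient of the other to equal $1$, so that $\rho$ is itself that pure state. In the remaining case $p$ and $q$ are both pure, hence lie in $\fS_H$, and extremality of $\rho$ in $\fS_H$ rules out $p\neq q$ with $\lambda\in(0,1)$; again $\rho$ is a pure state. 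The main (minor) obstacle is the segment-endpoint hypothesis of Lemma~\ref{Lemma:affine}, which amounts to propagating finiteness of $f_H$ from the relative interior of a segment to its endpoints via affineness on the cone.
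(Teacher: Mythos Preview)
Your proof is correct, but for $\fS_H$ it takes a longer route than the paper. Your argument for $\fT^1_H$ is essentially the paper's: an extreme point of $\fT^1_H$ must already be extreme in $\fT^1$, because any convex decomposition in $\fT^1$ is forced into $\fT^1_H$ by the additivity and finiteness of $f_H$. The paper simply repeats this \emph{same} argument verbatim for $\fS_H$ with $\fS$ in place of $\fT^1$: if $\rho\in\ext(\fS_H)$ and $\rho=(1-\lambda)\sigma+\lambda\tau$ with $\sigma,\tau\in\fS$, then finiteness of $\Tr H\rho$ forces $\Tr H\sigma,\Tr H\tau<\infty$, so $\sigma,\tau\in\fS_H$ and extremality gives $\sigma=\tau=\rho$. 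Hence $\rho\in\ext(\fS)$ is pure.

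Your detour through Lemma~\ref{Lemma:affine} (viewing $\fS_H$ as the trace-one hyperplane section of $\fT^1_H$) is valid---the hypotheses are checked correctly, including the endpoint condition via affineness of $f_H$ on the cone---but it is unnecessary here, since the direct transfer argument already works uniformly for both $\fS_H$ and $\fT^1_H$. The paper reserves Lemma~\ref{Lemma:affine} for the genuinely harder step in Theorem~\ref{thm:ExtSE}, where one intersects with the \emph{energy} hyperplane $\{f_H=E\}$ and the transfer argument no longer applies. Your approach does illustrate nicely that Lemma~\ref{Lemma:affine} is flexible enough to recover $\ext(\fS_H)$ from $\ext(\fT^1_H)$, but it buys no simplification over the one-line direct argument.
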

\begin{proof}
Let $\rho$ be an extreme point of $\fS_H$ (resp.~of $\fT^1_H$). Let
$\sigma,\tau$ be points in $\fS$ (resp.~in $\fT^1$) and let $\lambda\in(0,1)$
such that $\rho=(1-\lambda)\sigma+\lambda\tau$. We have to prove that
$\sigma=\tau=\rho$. There are three cases. First, if $\Tr H\sigma$ and
$\Tr H\tau$ are finite, then the claim follows from the assumption that $\rho$ is
an extreme point of $\fS_H$ (resp.~of $\fT^1_H$). Secondly, if $\Tr H\sigma=\infty$,
then
\[
\Tr H\rho
=\lim_{n\to\infty}\Tr(HP_n\rho)
=\lim_{n\to\infty}\Tr\big(HP_n((1-\lambda)\sigma+\lambda\tau)\big)
=\infty
\]
contradicts the assumption that $\Tr H\rho<\infty$. The case of $\Tr H\tau=\infty$
is analogous to $\Tr H\sigma=\infty$.
\end{proof}
It will be essential in the proof of Theorem~\ref{thm:ExtSE} that the energy
functional $f_H$ has a finite value at every superposition of two pure states
of finite energy. Therefore, we discuss the problem in detail. We define a
function $\cH\to[0,\infty]$ by
\[
\psi\mapsto\braket{\psi|H|\psi}\doteq
\int_0^{+\infty} \lambda\, d\!\braket{E_H(\lambda)\psi|\psi}.
\]
\par
\begin{Lem}\label{lem:finite-Bloch}
For all $\psi\in\cH$ we have
$\Tr H\ket{\psi}\!\!\bra{\psi}=\braket{\psi|H|\psi}$.
The set of $\psi\in\cH$ for which
$\braket{\psi|H|\psi}<\infty$ is a vector space.
\end{Lem}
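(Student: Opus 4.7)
The plan is to prove the trace identity first and then use the spectral picture to derive linearity of the finite-energy set.

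For the identity $\Tr H\ket{\psi}\!\!\bra{\psi}=\braket{\psi|H|\psi}$, I would simply apply the definition $f_H(A)=\lim_{n\to\infty}\Tr(HP_nA)$ to the rank-one operator $A=\ket{\psi}\!\!\bra{\psi}$. Since $HP_n$ is bounded, the trace identity for rank-one operators gives $\Tr(HP_n\ket{\psi}\!\!\bra{\psi})=\braket{\psi|HP_n|\psi}$, and the bounded functional calculus rewrites this as $\int_0^n\lambda\,d\!\braket{E_H(\lambda)\psi|\psi}$. Because $B\mapsto\braket{E_H(B)\psi|\psi}$ is a finite positive Borel measure on $[0,\infty)$, monotone convergence of the nondecreasing integrands $\lambda\mapsto\lambda$ cut off at $n$ yields $\int_0^\infty\lambda\,d\!\braket{E_H(\lambda)\psi|\psi}=\braket{\psi|H|\psi}$ in the limit.

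For the vector space claim, closure under scalar multiplication is immediate from $\braket{c\psi|H|c\psi}=|c|^2\braket{\psi|H|\psi}$. The main step is closure under addition. I would exploit the fact that $E_H(B)$ is an orthogonal projection for every Borel $B$, so the triangle inequality in $\cH$ together with the elementary bound $(a+b)^2\leq 2a^2+2b^2$ gives
\[
\|E_H(B)(\psi+\phi)\|^2\leq 2\|E_H(B)\psi\|^2+2\|E_H(B)\phi\|^2
\]
for every Borel set $B\subseteq[0,\infty)$. This is an inequality between three finite positive Borel measures on $[0,\infty)$, and it passes to integration against the nonnegative function $\lambda\mapsto\lambda$. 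Recalling the definition of $\braket{\psi|H|\psi}$, I conclude $\braket{\psi+\phi|H|\psi+\phi}\leq 2\braket{\psi|H|\psi}+2\braket{\phi|H|\phi}$, which is finite whenever both $\psi$ and $\phi$ lie in the set.

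No serious obstacle is anticipated: the lemma is in essence the well-known fact that the domain of the closed self-adjoint operator $\sqrt{H}=\int_0^\infty\sqrt{\lambda}\,dE_H(\lambda)$ is a linear subspace. The only mildly delicate point is the passage from a pointwise-in-$B$ inequality between positive scalar measures to the corresponding inequality between their $\lambda$-integrals, which is a routine application of standard measure theory.
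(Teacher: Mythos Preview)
Your proof is correct. The trace-identity half matches the paper's argument exactly: both compute $\Tr(HP_n\ket{\psi}\!\!\bra{\psi})=\braket{HP_n\psi|\psi}=\int_0^n\lambda\,d\!\braket{E_H(\lambda)\psi|\psi}$ and pass to the limit by monotone convergence.

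For the vector-space half there is a minor stylistic difference. The paper observes that $(\psi,\varphi)\mapsto\braket{HP_n\psi|\varphi}$ is a positive sesquilinear form, so $\psi\mapsto\sqrt{\braket{HP_n\psi|\psi}}$ is a seminorm; the triangle inequality for this seminorm, followed by $n\to\infty$, gives closure under addition. You instead work directly with the scalar spectral measures, bounding $\|E_H(B)(\psi+\phi)\|^2\le 2\|E_H(B)\psi\|^2+2\|E_H(B)\phi\|^2$ and integrating against $\lambda$. Both arguments are short and valid; the paper's seminorm route yields the slightly sharper triangle inequality $\sqrt{\braket{\psi+\phi|H|\psi+\phi}}\le\sqrt{\braket{\psi|H|\psi}}+\sqrt{\braket{\phi|H|\phi}}$, whereas your parallelogram-type bound gives the factor-of-two estimate, but for the finiteness conclusion either suffices.
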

\begin{proof}
The monotone convergence theorem and Proposition 4.1 of \cite{Schmuedgen2012}
show
\begin{equation}\label{eq:monotone}
\braket{\psi|H|\psi}
=\lim_{n\to\infty}
\int_0^n \lambda\, d\!\braket{E_H(\lambda)\psi|\psi}
=\lim_{n\to\infty}
\braket{HP_n\psi|\psi},
\qquad\psi\in\cH.
\end{equation}
As $\cH\times\cH\to\C$,
$(\psi,\varphi)\mapsto\braket{HP_n\psi|\varphi}$ is a positive sesquilinear
form, the function $\cH\to\R$, $\psi\mapsto\sqrt{\braket{HP_n\psi|\psi}}$ is a
seminorm for each $n$. The triangle inequality then shows that the set of vectors 
$\psi\in\cH$ for which $\braket{\psi|H|\psi}<\infty$ is a vector space.
Equation \eqref{eq:monotone} shows also that
\begin{align*}
\braket{\psi|H|\psi}
&=\lim_{n\to\infty}
\Tr(HP_n\ket{\psi}\!\!\bra{\psi})
=\Tr H\ket{\psi}\!\!\bra{\psi},
\qquad
\psi\in\cH.
\end{align*}
This completes the proof.
\end{proof}
Note that $\{\psi\in\cH : \braket{\psi|H|\psi}<\infty\}$ is the domain
$\cD(\sqrt{H})$ of the positive square root of $H$, which may be larger 
than the domain of $H$. For all $\psi\in\cD(\sqrt{H})$ we have
$\braket{\psi|H|\psi}=\braket{\sqrt{H}\psi|\sqrt{H}\psi}=\|\sqrt{H}\psi\|^2$,
see for example Proposition 10.5 of \cite{Schmuedgen2012}.
\par
The extreme points of $\fS_{H,E}$ and $\fT^1_{H,E}$ fall into two classes, those
on the hyperplane section
\[
\fA_{H,E}
=\{\rho\in \fT^+_H : \Tr H\rho=E\}
\]
of $\fT^+_H$, and those extreme points $\rho$ with energy $\Tr H\rho<E$.
\par
\begin{Thm}\label{thm:ExtSE}
Every extreme point of $\fS_{H,E}$ is a pure state. Every extreme point
of $\fT^1_{H,E}$ has rank at most one.
\end{Thm}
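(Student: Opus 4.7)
My plan is to apply Lemma~\ref{Lemma:affine} to $K=\fS_H$ (for the first assertion) and $K=\fT^1_H$ (for the second) with the affine functional $f_H|_K$ and threshold $\alpha=E$, so that the hyperplane section becomes $L=\fA_{H,E}\cap K$. First I would verify the two hypotheses on $K$: absence of infinite rays is immediate from trace-norm boundedness of $\fS$ and $\fT^1$; and if $(a,b)\subset K$ is an open segment, closedness of $\fS$ (resp.\ $\fT^1$) places $a,b$ in that set, while the identity $\Tr(HP_n((1-t)a+tb))=(1-t)\Tr(HP_na)+t\Tr(HP_nb)$ passed to the limit $n\to\infty$ rules out $\Tr Ha=\infty$ or $\Tr Hb=\infty$, since otherwise interior points of the segment would have infinite energy.

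Now given an extreme point $\rho\in\fS_{H,E}$ (resp.\ $\rho\in\fT^1_{H,E}$), I would split by the value of $\Tr H\rho$. If $\Tr H\rho<E$, then for any convex decomposition $\rho=(1-t)\sigma+t\tau$ inside $\fS$ (resp.\ $\fT^1$) with $t\in(0,1)$ and $\sigma\neq\tau$, the components $\sigma,\tau$ must have finite energy by the argument of Lemma~\ref{lem:ExtSV}, and then the perturbations $(1-t\mp\epsilon)\sigma+(t\pm\epsilon)\tau=\rho\pm\epsilon(\tau-\sigma)$ lie in $\fS_{H,E}$ (resp.\ $\fT^1_{H,E}$) for small $\epsilon>0$ thanks to the strict inequality, contradicting extremality. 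Hence $\rho$ is extreme in $\fS$ (resp.\ $\fT^1$), so $\rho$ is a pure state (resp.\ has rank at most one) by the standard description of those extreme points.

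If instead $\Tr H\rho=E$, then $\rho\in L$ and extremality of $\rho$ in the larger set is inherited by $L$. Lemma~\ref{Lemma:affine} together with Lemma~\ref{lem:ExtSV} writes $\rho=(1-\lambda)\pi_1+\lambda\pi_2$ with the $\pi_i$ extreme in $K$, that is, pure states (or, for $K=\fT^1_H$, possibly the zero operator). The subcases $\pi_1=\pi_2$, $\lambda\in\{0,1\}$, or some $\pi_i=0$ give the conclusion directly. The main obstacle is to exclude a genuine mixture of two distinct pure states $\pi_i=\ket{\psi_i}\!\!\bra{\psi_i}$ with $\lambda\in(0,1)$. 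Setting $E_i=\braket{\psi_i|H|\psi_i}$, the constraint $(1-\lambda)E_1+\lambda E_2=E$ splits into two sub-cases. If $E_1=E_2=E$, then $\pi_1,\pi_2\in L$ and the decomposition already contradicts extremality of $\rho$ in $L$. If $E_1\neq E_2$, I would produce a nonzero self-adjoint operator $X$ supported on $V=\mathrm{span}\{\psi_1,\psi_2\}$ with $\Tr X=0$ and $\Tr(HX)=0$; such an $X$ exists because on the four-dimensional real space of self-adjoint operators on $V$, the functionals $\Tr$ and $\Tr(H\cdot)$ are linearly independent unless the compression of $H$ to $V$ is a scalar, and a scalar compression would force $E_1=E_2$. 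Since $\rho|_V$ has two strictly positive eigenvalues, the operators $\rho\pm\epsilon X$ lie in $L$ for small $\epsilon>0$ and are distinct, again contradicting extremality of $\rho$ in $L$.
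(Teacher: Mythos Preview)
Your proof is correct and follows the same approach as the paper: the same case split on $\Tr H\rho<E$ versus $\Tr H\rho=E$, the same verification of the hypotheses of Lemma~\ref{Lemma:affine} for $K=\fS_H$ (resp.\ $\fT^1_H$), the same invocation of Lemmas~\ref{Lemma:affine} and~\ref{lem:ExtSV}, and the same reduction to excluding a genuine mixture of two distinct pure states supported on a two-dimensional subspace $V$. The only difference is cosmetic---where the paper phrases the final exclusion geometrically (the Bloch ball $\fB$ on $V$ lies in $K$ by Lemma~\ref{lem:finite-Bloch}, so $\fB\cap\fA_{H,E}$ is a disk or all of $\fB$ with $\rho$ in its relative interior), you perform the equivalent linear-algebra dimension count to produce an explicit traceless, energy-neutral self-adjoint perturbation $X$ on $V$.
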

\begin{proof}
Let $\rho$ be an extreme point of $\fS_{H,E}$ (resp.~of $\fT^1_{H,E}$).
If $\Tr H\rho<E$ we show that $\rho$ is an extreme point of $K=\fS_H$
(resp.~of $K=\fT^1_H$). If $\Tr H\rho=E$, we prove that $\rho$ is a convex
combination of two extreme points of $K$, and we prove that $\rho$ is not
a convex combination of two pure states. Using that, the claim follows
from Lemma~\ref{lem:ExtSV}.
\par
First, let $\Tr H\rho<E$. If $\rho$ is not an extreme point of $K$, then
there is a segment $\fs\subset K$ that contains $\rho$ as an interior point.
As $f_H$ is continuous and affine on $\fs$, there are points $\sigma\neq\tau$
in $\fs$ such that $\rho=\tfrac{1}2(\sigma+\tau)$ and such that
$\Tr H\sigma,\Tr H\tau\leq E$. Hence, the segment $[\sigma,\tau]$ lies in
$\fS_{H,E}$ (resp.~$\fT^1_{H,E}$). This is impossible, as $\rho$ is an extreme
point of $\fS_{H,E}$ (resp.~$\fT^1_{H,E}$), and proves that $\rho$ is an extreme
point of $K$.
\par
Secondly, let $\Tr H\rho=E$. As $\rho$ is an extreme point of $\fS_{H,E}$
(resp.~$\fT^1_{H,E}$), it is {\em a fortiori} an extreme point of
$L=\fS_{H,E}\cap\fA_{H,E}$ (resp.~$L=\fT^1_{H,E}\cap\fA_{H,E}$). Note that $L$
is the hyperplane section $L=\{\rho\in K: f_H(\rho)=E\}$ of $K$. As $K$ is
bounded it contains no infinite rays. Moreover, $f_H|_K$ being an affine map
implies that $f_H$ is bounded on every open segment in $K$. The lower
semi-continuity of $f_H$ then show that the endpoints of every open segment
in $K$ also belong to $K$. Therefore, Lemma~\ref{Lemma:affine} shows that $\rho$
is a convex combination of two extreme points of $K$. It remains to exclude the
case that $\rho$ is a mixture of two pure states in $K$, say
$\ket{\psi}\!\!\bra{\psi}$ and $\ket{\varphi}\!\!\bra{\varphi}$.
In that case, $\rho$ would be an interior point of the three-dimensional Bloch ball
$\fB$ of density operators acting on the span of $\psi$ and $\varphi$. As
$\fB\subset K$ by Lemma~\ref{lem:finite-Bloch}, the intersection $\fB\cap\fA_{H,E}$
is the entire ball $\fB$ or a disk. This contradicts the assumption that $\rho$ is
an extreme point of $L$, as $\fB\cap\fA_{H,E}\subset L$ and as $\rho$ is an interior
point of $\fB\cap\fA_{H,E}$.
\end{proof}
%
%%%%%%%%%%%%%%%%%%%%%%%%%%%%%%%%%%%%%%%%%%%%%%%%%%%%%%%%%%%%%%%%%%%%%%%%%%%%
%%%%%%%%%%%%%%%%%%%%%%%%%%%%%%%%%%%%%%%%%%%%%%%%%%%%%%%%%%%%%%%%%%%%%%%%%%%%
%%%%%%%%%%%%%%%%%%%%%%%%%%%%%%%%%%%%%%%%%%%%%%%%%%%%%%%%%%%%%%%%%%%%%%%%%%%%
%%%%%%%%%%%%%%%%%%%%%%%%%%%%%%%%%%%%%%%%%%%%%%%%%%%%%%%%%%%%%%%%%%%%%%%%%%%%
%%%%%%%%%%%%%%%%%%%%%%%%%%%%%%%%%%%%%%%%%%%%%%%%%%%%%%%%%%%%%%%%%%%%%%%%%%%%
%
\section{Pure-State Decomposition Theorem}
\label{sec:decomposition}
If $H$ is an unbounded positive operator on $\cH$ with a discrete spectrum of
finite multiplicity, then the sets $\fS_{H,E}$ and $\fT^1_{H,E}$ are compact.
This has been shown for $\fS_{H,E}$ in \cite{Holevo2004} and can be
shown easily for $\fT^1_{H,E}$ by using Proposition~11 in \cite[Appendix]{AQC}.
\par
If $H$ is an arbitrary positive operator, the sets $\fS_{H,E}$ and
$\fT^1_{H,E}$ are closed but not {\em compact}. Yet, they are {\em $\mu$-compact}
by Proposition 2 in \cite{HolevoShirokov2006} and Proposition 4 in \cite{P&Sh},
respectively. Proposition~5 in \cite{P&Sh} provides generalized assertions of
Krein-Milman's theorem (which we prove directly for $\fS_{H,E}$ below) and of
Choquet's theorem for $\mu$-compact sets. We employ Theorem~\ref{thm:ExtSE} to make
these assertions more explicit.
\par
\begin{Thm}\label{thm:Krein-Milman-and-Choquet}
Let $H$ be an arbitrary positive operator and $E>\inf\sigma(H)$, where $\sigma(H)$
is the spectrum of $H$. Then the set of extreme points $\,\ext\fS_{H,E}$ is nonempty
and closed.
\begin{enumerate}
\item[A (Krein-Milman's theorem).]
The set $\,\fS_{H,E}$ is the closure of the convex hull of $\,\ext\fS_{H,E}$.
\item[B (Choquet's theorem).]
Any state in $\fS_{H,E}$ is the barycenter $\int \sigma\mu(d\sigma)$ of some Borel 
probability measure $\mu$ supported by $\,\ext\fS_{H,E}$.
\end{enumerate}
\end{Thm}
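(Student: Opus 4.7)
The plan is to address the three assertions in order: nonemptiness and closedness of $\ext\fS_{H,E}$, then Krein--Milman (A), and finally Choquet (B). Throughout I use Theorem~\ref{thm:ExtSE} to identify the extreme points of $\fS_{H,E}$ with the pure states contained in $\fS_{H,E}$: every extreme point is pure by that theorem, and conversely any pure state in $\fS_{H,E}$ is automatically extreme there because it is already extreme in the larger set $\fS$.

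For nonemptiness, the hypothesis $E>\inf\sigma(H)$ guarantees some $E'<E$ with $E_H([0,E'])\neq 0$; any unit vector $\psi$ in the range of this spectral projector satisfies $\braket{\psi|H|\psi}\leq E'<E$, so $\ket{\psi}\!\!\bra{\psi}\in\ext\fS_{H,E}$. For closedness, I identify $\ext\fS_{H,E}=\fS_{H,E}\cap\ext\fS$; the set of pure states $\ext\fS$ is trace-norm closed in $\fS$ (for instance because $\rho\mapsto\Tr\rho^2$ is continuous on $\fS$ in the trace norm and $\ext\fS=\{\rho\in\fS:\Tr\rho^2=1\}$), so its intersection with the closed set $\fS_{H,E}$ is closed.

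For Krein--Milman (A), I start from the spectral decomposition $\rho=\sum_k p_k\ket{\psi_k}\!\!\bra{\psi_k}$ of a given $\rho\in\fS_{H,E}$, which satisfies $\sum_k p_k\braket{\psi_k|H|\psi_k}\leq E$. Indices with $\braket{\psi_k|H|\psi_k}\leq E$ already contribute pure states in $\ext\fS_{H,E}$, while the high-energy indices are paired with weighted portions of low-energy ones so that every pair has average energy at most $E$; on the corresponding two-dimensional span, the three-dimensional Bloch ball $\fB$ intersected with $\fS_{H,E}$ is compact, and classical Krein--Milman together with Theorem~\ref{thm:ExtSE} expresses the pair mixture as a convex combination of pure states in $\fS_{H,E}$. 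Truncating the spectral sum and passing to the trace-norm limit then places $\rho$ in $\cl\cv\ext\fS_{H,E}$.

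For Choquet (B), the set $\fS_{H,E}$ is $\mu$-compact by \cite{HolevoShirokov2006}, so Proposition~5 of \cite{P&Sh} produces, for each $\rho\in\fS_{H,E}$, a Borel probability measure on $\fS_{H,E}$ supported on the extreme points whose barycenter is $\rho$; combined with Theorem~\ref{thm:ExtSE}, this support lies in the set of pure states of energy at most $E$. The main obstacle is the explicit bookkeeping in part A: one must show that the global budget $\sum_k p_k\braket{\psi_k|H|\psi_k}\leq E$ suffices to redistribute all high-energy weight into pair mixtures of average energy $\leq E$, and that the regrouped sum converges in trace norm to $\rho$.
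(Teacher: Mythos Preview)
Your treatment of nonemptiness, closedness, and part~B is essentially identical to the paper's: all three rest on the identification $\ext\fS_{H,E}=\fS_{H,E}\cap\ext\fS$ coming from Theorem~\ref{thm:ExtSE}, together with $\mu$-compactness and Proposition~5 of \cite{P&Sh} for the Choquet step.

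The genuine divergence is in part~A. You work directly with the full spectral decomposition of $\rho$ and try to regroup the eigenprojectors into rank-$\leq 2$ blocks of average energy $\leq E$, then invoke the Bloch-ball argument on each block. This is workable---the inequality $\sum_{e_k>E}p_k(e_k-E)\leq\sum_{e_j\leq E}p_j(E-e_j)$ guarantees enough low-energy surplus to cover every high-energy deficit, and a greedy matching does the job---but, as you yourself flag, the bookkeeping is real: one high-energy index may need portions of several low-energy ones (so the ``pairs'' may actually be larger finite blocks), and one must track that the regrouped countable sum still converges in trace norm to $\rho$. The paper sidesteps all of this with a single preliminary reduction (Lemma~\ref{Lemma:density}): truncate the spectral sum of $\rho$ and add back one rank-one term of sufficiently small energy so that the truncation stays in $\fS_{H,E}$; this gives a sequence of \emph{finite-rank} states in $\fS_{H,E}$ converging to $\rho$. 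For a finite-rank state the support $\cH_\rho$ is finite-dimensional, $PHP$ is bounded there, and a single application of Carath\'eodory's theorem together with Theorem~\ref{thm:ExtSE} on that finite-dimensional slice writes the state as a finite convex combination of pure states in $\fS_{H,E}$. So the paper trades your infinite pairing procedure for one approximation lemma plus one finite-dimensional step; your route is more hands-on but buys nothing extra, while the paper's is shorter and avoids the redistribution argument entirely.
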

\begin{proof}
The first assertion follows as the infimum of the spectrum of $H$ is equal to the
infimum of $\braket{\varphi|H|\varphi}$ over all unit vectors $\varphi$.
By Theorem~\ref{thm:ExtSE}, the set of extreme points of $\fS_{H,E}$ is the
intersection of $\fS_{H,E}$ with the set of all pure states, both of which are
closed. This shows that $\,\ext\fS_{H,E}$ is closed.
\par
A) By Lemma~\ref{Lemma:density} below, it suffices to show that any finite rank
state $\rho$ in $\fS_{H,E}$ can be represented as a convex combination of pure states
in $\fS_{H,E}$. Let $\cH_{\rho}$ be the support of $\rho$ and $P$ the projector on
this subspace. Since the subspace $\cH_{\rho}$ is finite-dimensional, it is easy to
see that the finiteness of $\Tr H\rho$ implies that $\cH_{\rho}$ belongs to the domain
of $\sqrt{H}$.  It follows that $H_{\rho}=PHP$ is a bounded operator on $\cH_{\rho}$.
The state $\rho$ lies in the set $\fS_{\rho}(E)$ of all states $\sigma$ supported by
$\cH_{\rho}$ and satisfying the inequality $\Tr H_{\rho}\sigma\leq E$. By
Carath{\'e}odory's theorem, $\rho$ is a finite convex combination of extreme points of
$\fS_{\rho}(E)$ which are pure states by Theorem \ref{thm:ExtSE}. It is clear that
$\fS_{\rho}(E)\subseteq\fS_{H,E}$.
\par
B) This assertion follows from Proposition 5 in \cite{P&Sh}, since the set $\fS_{H,E}$
is $\mu$-compact and since $\,\ext\fS_{H,E}$ is closed.
\end{proof}
Note that the closedness of the set $\,\ext\fS_{H,E}$ is not obvious even in the
case of an operator $H$ with discrete spectrum or in the case of $\dim(\cH)<\infty$. 
This property is necessary for the stability \cite{P&Sh} of $\,\fS_{H,E}$.
\par
The set of extreme points $\,\ext\fS_{H,E}$ may be empty if $E=\inf\sigma(H)$ is
the ground-state energy of $H$. The other assertions of
Theorem~\ref{thm:Krein-Milman-and-Choquet} remain true if we replace
$E>\inf\sigma(H)$ with $E=\inf\sigma(H)$.
\par
\begin{Que*}
Under which conditions on the operator $H$ can part B of
Theorem~\ref{thm:Krein-Milman-and-Choquet} be strengthened to the statement that
any state in $\fS_{H,E}$ is a \emph{countable} convex combination of pure states in
$\fS_{H,E}$? This and the arguments of Corollary~\ref{cor:Choquet-same-energy}
below would imply that any state with finite energy is a \emph{countable} convex
combination of pure states with the same energy. In the finite dimensional settings
the existence of such decomposition is shown in \cite{Man}, it also follows directly
from the proof of Theorem~\ref{thm:ExtSE} (second case $\Tr H\rho=E$) and 
Carath{\'e}odory's theorem.
\end{Que*}
We turn to an intriguing representation at constant energy.
\par
\begin{Cor}\label{cor:Choquet-same-energy}
Let $H$ be an arbitrary positive operator. Any state $\rho$ such that
$\Tr H\rho=E<+\infty$ can be represented as follows
\begin{equation}\label{u-rep}
\rho=\int \sigma\mu(d\sigma),
\end{equation}
where $\mu$ is a Borel probability measure supported by pure states such that
$\Tr H\sigma=E$ for $\mu$-almost all $\sigma$.
\end{Cor}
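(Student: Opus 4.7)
The plan is to apply part B of Theorem~\ref{thm:Krein-Milman-and-Choquet} directly to $\rho$, and then to show that the resulting measure must concentrate on pure states of energy exactly $E$. Note that $\rho\in\fS_{H,E}$ since $\Tr H\rho=E$, and the remark after Theorem~\ref{thm:Krein-Milman-and-Choquet} covers the boundary case $E=\inf\sigma(H)$, so the hypotheses of that theorem are satisfied in every case.

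First I would invoke Theorem~\ref{thm:Krein-Milman-and-Choquet}(B) to obtain a Borel probability measure $\mu$ supported by $\ext\fS_{H,E}$ such that $\rho=\int\sigma\,\mu(d\sigma)$. By Theorem~\ref{thm:ExtSE}, the extreme points of $\fS_{H,E}$ are pure states, so $\mu$ is automatically supported by pure states of energy at most $E$. It remains to upgrade the inequality $\Tr H\sigma\leq E$ to an equality $\mu$-almost everywhere.

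Next I would upgrade the barycentric identity to an integral identity for the unbounded functional $f_H$. The barycenter formula yields $\Tr(B\rho)=\int \Tr(B\sigma)\,\mu(d\sigma)$ for every bounded self-adjoint $B$. Applying this with the bounded operator $B=HP_n$ gives
\begin{equation*}
\Tr(HP_n\rho)=\int \Tr(HP_n\sigma)\,\mu(d\sigma).
\end{equation*}
The integrand $\sigma\mapsto\Tr(HP_n\sigma)$ is nonnegative and increases pointwise to $f_H(\sigma)=\Tr H\sigma$ as $n\to\infty$, by the very definition of $f_H$. Monotone convergence applied to the left-hand side (where the target is $\Tr H\rho$) and to the right-hand side (interchanging $\sup_n$ with the integral against $\mu$) therefore yields
\begin{equation*}
E=\Tr H\rho=\int \Tr H\sigma\,\mu(d\sigma).
\end{equation*}

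Finally I would combine this with the support constraint. Since $\mu$ is supported by $\fS_{H,E}$, we have $\Tr H\sigma\leq E$ for $\mu$-almost every $\sigma$; together with the integral equality this forces $\Tr H\sigma=E$ $\mu$-a.e., completing the proof. The only real technical point is the interchange of the limit in $n$ with the integral against $\mu$, but this is handled cleanly by the explicit monotone approximation $f_H=\sup_n \Tr(HP_n\,\cdot\,)$ that already appears in the paper; everything else is formal once Theorems~\ref{thm:ExtSE} and \ref{thm:Krein-Milman-and-Choquet} are in hand.
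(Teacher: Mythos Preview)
Your proposal is correct and follows essentially the same route as the paper: invoke Theorem~\ref{thm:Krein-Milman-and-Choquet}(B) to obtain a measure supported on pure states in $\fS_{H,E}$, establish the integral identity $\int \Tr H\sigma\,\mu(d\sigma)=E$, and conclude $\Tr H\sigma=E$ $\mu$-a.e.\ from the pointwise bound $\Tr H\sigma\leq E$. The only difference is that the paper justifies the integral identity by citing a general fact about nonnegative affine lower semicontinuous functions from \cite[Appendix]{EM}, whereas you spell out the monotone-convergence argument with the truncations $HP_n$ explicitly; this makes your version slightly more self-contained but is the same idea.
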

\begin{proof} The assertion B of Theorem~\ref{thm:Krein-Milman-and-Choquet}
implies that equation (\ref{u-rep}) holds for some probability measure $\mu$
supported by the set $\,\ext\fS_{H,E}=\fS_{H,E}\cap\ext\fS$. Since the function
$\sigma\mapsto\Tr H\sigma$ is nonnegative affine and lower semicontinuous, we
have (see, f.i., \cite[the Appendix]{EM})
$$
\int \Tr(H\sigma)\,\mu(d\sigma)=\Tr H\rho=E.
$$
Since $\Tr H\sigma\leq E$ for all $\sigma$ in the support of $\mu$, this equality
implies that $\Tr H\sigma=E$ for $\mu$-almost all $\sigma$.
\end{proof}
Note: Despite the fact that the convex set of states with given energy $E$ is not
closed, any state in it can be represented as a continuous convex combination of
pure states from this set.
\par
The barycenter of a Borel probability measure supported on pure states is known
as a {\em continuous convex combination} or a {\em generalized ensemble} of pure 
states \cite{HolevoShirokov2006}. In this sense, the probability measure $\mu$  
in part B of Theorem~\ref{thm:Krein-Milman-and-Choquet} is a generalized ensemble 
of pure states of bounded energy. In the strict sense, the probability measure $\mu$
in Corollary~\ref{cor:Choquet-same-energy} is not a generalized ensemble of pure 
states of constant energy $E$, as the support of $\mu$ may contain a set of 
$\mu$-measure zero where the energy is smaller than $E$.
\par
\begin{Lem}\label{Lemma:density}
Let $H$ be an arbitrary positive operator and $E\in\R$. The set of
finite-rank states in $\fS_{H,E}$ is dense in $\fS_{H,E}$.
\end{Lem}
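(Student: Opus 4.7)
The plan is to construct, for each $\rho \in \fS_{H,E}$ of infinite rank, an approximating sequence of finite-rank states in $\fS_{H,E}$ converging in trace norm (a finite-rank $\rho$ is trivially approximated by the constant sequence $\rho_n = \rho$). I would start from the spectral decomposition $\rho = \sum_{i=1}^\infty p_i \ket{\psi_i}\!\!\bra{\psi_i}$ with each $p_i > 0$, $\sum_i p_i = 1$, and the $\psi_i$ orthonormal. Lemma~\ref{lem:finite-Bloch} together with the expression $f_H(\rho) = \sup_n \Tr(HP_n\rho)$ and monotone convergence then gives that every $\psi_i$ lies in $\cD(\sqrt{H})$ and
\[
\sum_{i=1}^\infty p_i \epsilon_i \;=\; \Tr H\rho \;\leq\; E, \qquad \epsilon_i := \braket{\psi_i|H|\psi_i}.
\]

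The approximants are built by truncation with a controlled reinjection of the discarded mass. Set $t_n := \sum_{i>n} p_i > 0$; the crucial step is to pick an index $j(n) > n$ with $\epsilon_{j(n)} \leq \bar{E}_n$, where $\bar{E}_n := t_n^{-1}\sum_{i>n} p_i \epsilon_i$ is the weighted average of the tail energies. Such a $j(n)$ must exist, since otherwise every summand in the weighted average $\bar{E}_n$ would strictly exceed $\bar{E}_n$. I would then define
\[
\rho_n \;:=\; \sum_{i=1}^n p_i \ket{\psi_i}\!\!\bra{\psi_i} \;+\; t_n \ket{\psi_{j(n)}}\!\!\bra{\psi_{j(n)}},
\]
which is a density operator of rank at most $n+1$.

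The remaining checks are short. The energy bound telescopes:
\[
\Tr H \rho_n \;=\; \sum_{i=1}^n p_i\epsilon_i + t_n \epsilon_{j(n)} \;\leq\; \sum_{i=1}^n p_i\epsilon_i + \sum_{i>n} p_i\epsilon_i \;=\; \Tr H\rho \;\leq\; E,
\]
so $\rho_n \in \fS_{H,E}$. Since $\rho$ and $\rho_n$ are jointly diagonal in $\{\psi_i\}$, trace-norm convergence is immediate from $\|\rho_n - \rho\|_1 = 2 \sum_{i>n,\,i\neq j(n)} p_i \leq 2 t_n \to 0$. The hard part will be precisely the selection of $j(n)$: naive alternatives, such as renormalising the truncated operator $\sum_{i \leq n} p_i \ket{\psi_i}\!\!\bra{\psi_i}$ or redepositing the missing mass on a fixed pure vector like $\psi_1$, can push the expected energy above $E$ in the boundary case $\Tr H\rho = E$, and the averaging criterion $\epsilon_{j(n)} \leq \bar E_n$ is exactly what makes the energy budget balance.
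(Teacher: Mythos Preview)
Your proof is correct and follows essentially the same approach as the paper: truncate the spectral decomposition of $\rho$ and reinject the discarded mass $t_n$ on a single unit vector whose energy does not exceed the tail average $\bar E_n$. The only difference is in how that vector is chosen --- you pick it among the tail eigenvectors $\psi_{j(n)}$ via the averaging argument, whereas the paper takes any unit vector $\tau_n$ with $\langle\tau_n|H|\tau_n\rangle$ close to $\inf\sigma(H)$; your choice is arguably cleaner since it keeps $\rho_n$ diagonal in the same basis and avoids the case split on whether the spectral infimum is attained.
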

\begin{proof} Let $\rho=\sum_{i=1}^{+\infty}p_i\ket{\varphi_i}\!\!\bra{\varphi_i}$
be an infinite-rank state in $\fS_{H,E}$. For any given $n$ let
$$
\rho_n=\sum_{i=1}^{n}p_i\ket{\varphi_i}\!\!\bra{\varphi_i}
 +q_n\ket{\tau_n}\!\!\bra{\tau_n},
\quad
q_n=\sum_{i>n}p_i,
$$
where $\tau_n\in\cH$ is a unit vector such that
$\braket{\tau_n|H|\tau_n}\leq q_n^{-1}\sum_{i>n}p_i\braket{\varphi_i|H|\varphi_i}$.
The existence of the vector $\tau_n$ is clear if $E_0=\inf\sigma(H)$ is an eigenvalue
of $H$, as the infimum of the spectrum of $H$ is equal to the infimum of
$\braket{\varphi|H|\varphi}$ over all unit vectors $\varphi$. This implies also
the existence of $\tau_n$ if $E_0$ is not an eigenvalue of $H$. In that case
\[
\braket{\varphi|H|\varphi}
=\braket{\varphi|H-E_0|\varphi}+E_0
=\|\sqrt{H-E_0}\,\varphi\|^2+E_0
>E_0
\]
holds for all unit vectors $\varphi\in\cH$. It is easy to see that $\rho_n\in\fS_{H,E}$
and that $\rho_n\to\rho$ as $n\to+\infty$.
\end{proof}
%
%%%%%%%%%%%%%%%%%%%%%%%%%%%%%%%%%%%%%%%%%%%%%%%%%%%%%%%%%%%%%%%%%%%%%%%%%%%%
%%%%%%%%%%%%%%%%%%%%%%%%%%%%%%%%%%%%%%%%%%%%%%%%%%%%%%%%%%%%%%%%%%%%%%%%%%%%
%%%%%%%%%%%%%%%%%%%%%%%%%%%%%%%%%%%%%%%%%%%%%%%%%%%%%%%%%%%%%%%%%%%%%%%%%%%%
%%%%%%%%%%%%%%%%%%%%%%%%%%%%%%%%%%%%%%%%%%%%%%%%%%%%%%%%%%%%%%%%%%%%%%%%%%%%
%%%%%%%%%%%%%%%%%%%%%%%%%%%%%%%%%%%%%%%%%%%%%%%%%%%%%%%%%%%%%%%%%%%%%%%%%%%%
%
\section{Applications to Quantum Information Theory}
\label{sec:applications}
In this section we consider some applications of our main results in quantum
information theory and mathematical physics. These applications are based on
the following observation.
\par
\begin{Pro}\label{pro:supremum-pure}
Let $H$ be an arbitrary positive operator and $f$ a convex function on the set
$\,\fS_{H,E}$, which is either lower semicontinuous or upper semicontinuous and
upper bounded. Then
\begin{equation}\label{c-2-r}
\sup_{\rho\in\fS_{H,E}}f(\rho)=\sup_{\varphi\in\cH_{E}}f(\ket{\varphi}\!\!\bra{\varphi}),
\end{equation}
where $\cH_{E}=\{\varphi\in\cH\,|\,\braket{\varphi|H|\varphi}\leq E, \|\varphi\|=1\}$.
If the function $f$ is upper semicontinuous and the operator $H$ has unbounded discrete
spectrum of finite multiplicity,
then the supremum on the right-hand side of \eqref{c-2-r} is attained at a unit vector
in $\cH_{E}$.
\end{Pro}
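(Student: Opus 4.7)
The plan is to reduce \eqref{c-2-r} to the Choquet representation in Theorem~\ref{thm:Krein-Milman-and-Choquet}(B), and to settle the attainment claim by a compactness argument using the hypothesis on $H$.

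The inequality $\geq$ in \eqref{c-2-r} is immediate: Lemma~\ref{lem:finite-Bloch} places $\ket{\varphi}\!\!\bra{\varphi}$ in $\fS_{H,E}$ for every $\varphi\in\cH_E$. For the reverse inequality, set $M=\sup_{\varphi\in\cH_E}f(\ket{\varphi}\!\!\bra{\varphi})$ and fix $\rho\in\fS_{H,E}$. By Theorem~\ref{thm:Krein-Milman-and-Choquet}(B) there is a Borel probability measure $\mu$ supported on $\ext\fS_{H,E}$ with $\rho=\int\sigma\,\mu(d\sigma)$; by Theorem~\ref{thm:ExtSE}, each $\sigma$ in the support is a pure state $\ket{\varphi}\!\!\bra{\varphi}$ with $\varphi\in\cH_E$, so that $f(\sigma)\leq M$ for $\mu$-almost every $\sigma$. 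I would then invoke Jensen's inequality for barycenters, valid under either of the semicontinuity hypotheses placed on $f$, to obtain
\[
f(\rho)\leq\int f(\sigma)\,\mu(d\sigma)\leq M,
\]
which establishes \eqref{c-2-r}.

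For the attainment claim, when $H$ has an unbounded discrete spectrum of finite multiplicity, $\fS_{H,E}$ is compact, as recalled at the beginning of Section~\ref{sec:decomposition}. By Theorem~\ref{thm:Krein-Milman-and-Choquet}, $\ext\fS_{H,E}$ is closed in $\fS_{H,E}$ and therefore itself compact. An upper semicontinuous real-valued function attains its supremum on a nonempty compact set, so there exists a unit vector $\varphi_\ast\in\cH_E$ such that $f(\ket{\varphi_\ast}\!\!\bra{\varphi_\ast})$ equals the supremum on the right-hand side of \eqref{c-2-r}.

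The main obstacle is handling Jensen's inequality simultaneously in the two semicontinuity regimes. In the lower semicontinuous case the argument is transparent: the sublevel set $\{\sigma\in\fS_{H,E}:f(\sigma)\leq M\}$ is closed and convex, hence contains the closed convex hull of $\mathrm{supp}(\mu)$ and, in particular, the barycenter $\rho$. In the upper semicontinuous and upper bounded case the corresponding sublevel set need not be closed, and one must instead appeal to the barycentric form of Jensen's inequality used, for instance, in \cite[Appendix]{EM} and in the proof of Corollary~\ref{cor:Choquet-same-energy}, where the upper bound on $f$ is precisely what secures integrability of $f$ against $\mu$ and allows the passage to the limit through a monotone approximation of $f$ by continuous functions.
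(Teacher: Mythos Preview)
Your proof is correct and follows essentially the same route as the paper: the Choquet representation from Theorem~\ref{thm:Krein-Milman-and-Choquet}(B), Jensen's inequality under the stated semicontinuity hypotheses (with the same appeal to \cite[Appendix]{EM}), and the compactness argument for attainment via the closedness of $\ext\fS_{H,E}$. Your final paragraph unpacking the two Jensen regimes is additional detail that the paper leaves to the cited reference.
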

\begin{proof}
By Theorem~\ref{thm:Krein-Milman-and-Choquet}, for any mixed state
$\rho$ in $\fS_{H,E}$ there is a probability measure $\mu$  supported by pure states
in $\fS_{H,E}$ such that
$$
\rho=\int \sigma\mu(d\sigma).
$$
The assumed properties of the function $f$ guarantees (see, f.i.,
\cite[the Appendix]{EM}) the validity of the Jensen inequality
$$
f(\rho)\leq \int f(\sigma)\mu(d\sigma),
$$
which implies the existence of a pure state $\sigma$ in $\fS_{H,E}$ such that
$f(\sigma)\geq f(\rho)$.
\par
If the operator $H$ has unbounded discrete spectrum of finite multiplicity then
the set $\fS_{H,E}$ is compact. Hence, the set of extreme points $\,\ext\fS_{H,E}$
is compact by Theorem~\ref{thm:Krein-Milman-and-Choquet}. This and the above
arguments imply that the first supremum in (\ref{c-2-r}) is attained at a pure
state in $\fS_{H,E}$ (provided that the function $f$ is upper semicontinuous).
\end{proof}
\begin{Rem}\label{rem:energy-equality}
The arguments from the proof of Proposition~\ref{pro:supremum-pure}, when using
Corollary~\ref{cor:Choquet-same-energy} instead of
Theorem~\ref{thm:Krein-Milman-and-Choquet}, show that
\begin{equation*}%\label{c-2-r}
\sup_{\rho\in\fS(\cH), \Tr H\rho=E}f(\rho)
=\sup_{\varphi\in\cH_1: \braket{\varphi|H|\varphi}=E}f(\ket{\varphi}\!\!\bra{\varphi})
\end{equation*}
for any convex function $f$ on the set $\,\fS_{H,E}$ which is either lower
semicontinuous or upper semicontinuous and upper bounded. In the above formula
$\cH_{1}$ is the unit sphere in $\cH$.
\end{Rem}
Of course, we may replace the convex function $f$ in
Proposition~\ref{rem:energy-equality} by the concave function $-f$
(and supremum by infimum). This idea is motivated by potential applications, since
many important characteristics of a state in quantum information theory are concave
lower semicontinuous and nonnegative. See the following examples.
\par
\begin{Exa}[The minimal output entropy of an energy-constrained quantum channel]
The \emph{von Neumann entropy} of a quantum state
$\rho$ in $\fS(\cH)$ is a basic characteristic of this state defined by the formula
$H(\rho)=\operatorname{Tr}\eta(\rho)$, where  $\eta(x)=-x\log x$ for $x>0$
and $\eta(0)=0$. The function $H(\rho)$ is  concave and lower semicontinuous on the
set~$\fS(\cH)$ and takes values in~$[0,+\infty]$, see for example \cite{H-SCI,L-2,W}.
\par
A quantum channel from a system $A$ to a system $B$ is a completely positive
trace-preserving linear map $\Phi:\fT(\cH)\to\fT(\cK)$ between the Banach
spaces $\fT(\cH)$ and $\fT(\cK)$, where $\cH$ and $\cK$ are Hilbert spaces
associated with the systems $A$ and $B$, respectively. In the analysis of
information abilities of quantum channels, the notion of the minimal output
entropy of a channel is widely used \cite{H-SCI,GP&Co,MGH,Man,Shor}. It is defined
as
$$
H_{\rm min}(\Phi)=\inf_{\rho\in\fS(\cH)} H(\Phi(\rho))=\inf_{\varphi\in\cH_1} H(\Phi(\ket{\varphi}\!\!\bra{\varphi})),
$$
where $\cH_{1}$ is the unit sphere in $\cH$, and where the second equality follows
from the concavity of the function $\rho\mapsto H(\Phi(\rho))$ and from the
possibility to decompose any mixed state into a convex combination of pure states.
\par
In studies of infinite-dimensional quantum channels, it is reasonable to impose the
energy-constraint on input states of these channels. So, alongside with the minimal
output entropy $H_{\rm min}(\Phi)$, it is reasonable to consider its constrained
versions (cf.~\cite{Man})
\begin{eqnarray}
\label{CME-1}
H_{\rm min}(\Phi,H,E) =
\inf_{\rho\in\fS(\cH):\Tr H\rho\leq E} H(\Phi(\rho)),\\[.5\baselineskip]
\label{CME-2}
H^{=}_{\rm min}(\Phi,H,E) =
\inf_{\rho\in\fS(\cH): \Tr H\rho=E} H(\Phi(\rho)).
\end{eqnarray}
In contrast to the unconstrained case, it is not obvious that the infima in
(\ref{CME-1}) and (\ref{CME-2}) can be taken only over pure states satisfying the
conditions $\Tr H\rho\leq E$ and $\Tr H\rho=E$ correspondingly. In \cite{Man} it
is shown that this holds in the finite-dimensional settings. The above
Proposition~\ref{pro:supremum-pure} allows to prove the same assertion for an
arbitrary infinite-dimensional channel $\Phi$ and any energy observable $H$.
\end{Exa}
\begin{Cor}\label{cor:minimum-output}
Let $H$ be an arbitrary positive operator and let
$E$ be greater than the infimum of the spectrum of $H$. Then both infima in (\ref{CME-1}) and (\ref{CME-2}) can be taken over pure states, i.e.
\begin{eqnarray}
\label{CME+1}
H_{\rm min}(\Phi,H,E) =
\inf_{\varphi\in\cH_1:\,\braket{\varphi|H|\varphi}\leq E} H(\Phi(\ket{\varphi}\!\!\bra{\varphi})),\\[.5\baselineskip]
\label{CME+2}
H^{=}_{\rm min}(\Phi,H,E) =
\inf_{\varphi\in\cH_1:\,\braket{\varphi|H|\varphi}=E} H(\Phi(\ket{\varphi}\!\!\bra{\varphi})).
\end{eqnarray}
If the operator $H$ has unbounded discrete spectrum of finite multiplicity, then the
infimum in (\ref{CME+1}) is attained at a unit vector.
\end{Cor}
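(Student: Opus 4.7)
The plan is to apply Proposition~\ref{pro:supremum-pure} and Remark~\ref{rem:energy-equality} to the function $g:\fS(\cH)\to[0,+\infty]$ defined by $g(\rho)=H(\Phi(\rho))$, in the concave/infimum form noted in the paragraph between them. First I would verify that $g$ satisfies the required hypotheses. The channel $\Phi$ is an affine map that is continuous in the trace norm, and the von Neumann entropy is concave, lower semicontinuous, and nonnegative on $\fS(\cK)$; composition with the affine, continuous map $\Phi$ preserves all three properties on $\fS(\cH)$, and in particular on the restriction $\fS_{H,E}$. Equivalently, $-g$ is convex, upper semicontinuous, and upper bounded, which is exactly the second alternative in the hypotheses of Proposition~\ref{pro:supremum-pure}.

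Next, I would set $f=-g$ in equation~\eqref{c-2-r} and negate both sides. The left-hand side becomes $\inf_{\rho\in\fS_{H,E}}H(\Phi(\rho))=H_{\rm min}(\Phi,H,E)$, while the right-hand side becomes the infimum of $H(\Phi(\ket{\varphi}\!\!\bra{\varphi}))$ over unit vectors $\varphi$ with $\braket{\varphi|H|\varphi}\leq E$, yielding~\eqref{CME+1}. The identical negation applied to the equation stated in Remark~\ref{rem:energy-equality} (where $\leq E$ is replaced by $=E$ on both sides) yields~\eqref{CME+2}; here one also uses the hypothesis $E>\inf\sigma(H)$ to ensure that the constraint set $\{\rho\in\fS(\cH):\Tr H\rho=E\}$ is nonempty.

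For the attainment claim, I would invoke the last sentence of Proposition~\ref{pro:supremum-pure}: when $H$ has unbounded discrete spectrum of finite multiplicity, the set $\fS_{H,E}$ is compact, and $\,\ext\fS_{H,E}$, being the intersection of $\fS_{H,E}$ with the closed set of pure states, is compact as well. Since $g$ is lower semicontinuous on this compact set, the infimum in~\eqref{CME+1} is attained at some unit vector $\varphi\in\cH$ with $\braket{\varphi|H|\varphi}\leq E$, which is the desired statement after identifying pure states with unit vectors.

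The only routine verifications are the lower semicontinuity and concavity of $g$, both of which follow from the corresponding properties of the von Neumann entropy (standard, see \cite{H-SCI,W}) together with the affineness and trace-norm continuity of $\Phi$. With these in hand, the entire corollary amounts to a translation of Proposition~\ref{pro:supremum-pure} and Remark~\ref{rem:energy-equality} into entropy language, and I anticipate no significant obstacle beyond careful bookkeeping of signs and of the two constraint sets.
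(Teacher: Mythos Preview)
Your proposal is correct and follows essentially the same approach as the paper: apply Proposition~\ref{pro:supremum-pure} and Remark~\ref{rem:energy-equality} to $f=-H(\Phi(\cdot))$, after observing that $\rho\mapsto H(\Phi(\rho))$ is concave, nonnegative, and lower semicontinuous (as the composition of a continuous map with a lower semicontinuous one). The paper's proof is a one-line version of exactly this argument; your additional remarks on nonemptiness and on the compactness of $\ext\fS_{H,E}$ are consistent with what the paper uses implicitly.
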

\begin{proof}
By Proposition~\ref{pro:supremum-pure} and Remark~\ref{rem:energy-equality}, it
suffices to note that the function $\rho\mapsto H(\Phi(\rho))$ is concave nonnegative
and lower semicontinuous (as a composition of a continuous and a lower 
semicontinuous function).
\end{proof}
Corollary~\ref{cor:minimum-output} simplifies the definitions of the
quantities $H_{\rm min}(\Phi,H,E)$ and $H^{=}_{\rm min}(\Phi,H,E)$ significantly. It
also shows that
$$
H_{\rm min}(\widehat{\Phi},H,E)=H_{\rm min}(\Phi,H,E)\quad \mbox{and}
\quad H^{=}_{\rm min}(\widehat{\Phi},H,E)=H^{=}_{\rm min}(\Phi,H,E),
$$
where $\widehat{\Phi}$ is a complementary channel to the channel $\Phi$, since
for any pure state $\rho$ we have $H(\widehat{\Phi}(\rho))=H(\Phi(\rho))$, see
Section~8.3 of \cite{H-SCI}.
\par
\begin{Exa}[On the definition of the operator E-norms]
On the algebra $\fB(\cH)$ of all bounded operators one can consider the family
$\{\|A\|_E^H\}_{E>0}$ of norms induced by a positive operator $H$ with the infimum
of the spectrum equal to zero \cite{ECN}. For any $E>0$ the norm $\|A\|_E^H$ is
defined as
\begin{equation}\label{ec-on}
 \|A\|^{H}_E\doteq \sup_{\rho\in\fS(\cH):\Tr H\rho\leq E}\sqrt{\Tr A\rho A^*}.
\end{equation}
These norms, called operator \emph{E}-norms, appear as ``doppelganger'' of
the energy-constrained Bures distance between completely positive linear maps in 
the generalized version of the Kretsch\-mann-Schlingemann-Werner theorem 
\cite[Section 4]{ECN}.
\par
For any $A\in\fB(\cH)$ the function $E\mapsto\|A\|_E^H$ is concave and tends to
$\|A\|$ (the operator norm of $A$) as $E\to+\infty$. All the norms $\|A\|_E^H$
are equivalent (for different $E$ and fixed $H$) on $\fB(\cH)$ and generate a
topology depending on the operator $H$. If $H$ is an unbounded operator then
this topology is weaker than the norm topology on $\fB(\cH)$, it coincides with
the strong operator topology on bounded subsets of $\fB(\cH)$ provided that the
operator $H$ has unbounded discrete spectrum of finite multiplicity.
\par
If we assume that the supremum in (\ref{ec-on}) can be taken only over pure states
$\rho$ such that $\Tr H\rho\leq E$ then we obtain the following simpler definition
\begin{equation}\label{ec-on-b}
 \|A\|^H_E\doteq \sup_{\varphi\in\cH_1,\braket{\varphi|H|\varphi}\leq E}\|A\varphi\|,
\end{equation}
which shows the sense of the norm $\|A\|^H_E$ as a constrained version of the operator
norm $\|A\|$. In \cite{ECN} the above assumption was proved only in the case when the
operator $H$ has unbounded discrete spectrum of finite multiplicity.
Proposition~\ref{pro:supremum-pure} (applied to the continuous affine function
$f(\rho)=\Tr A\rho A^*$) allows to fill this gap.
\end{Exa}
\begin{Cor}
For an arbitrary positive operator $H$, the definitions (\ref{ec-on}) and (\ref{ec-on-b})
coincide for any $A\in\fB(\cH)$.
\end{Cor}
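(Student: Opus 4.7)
The plan is to apply Proposition~\ref{pro:supremum-pure} to the affine functional
$f:\fS_{H,E}\to[0,\infty)$ defined by $f(\rho)=\Tr A\rho A^*$. Because this $f$ is
linear in $\rho$ and satisfies $|f(\rho)|\leq\|A\|^2\|\rho\|_1=\|A\|^2$ on the set of
states, it is both convex and concave, both lower and upper semicontinuous with respect
to the trace-norm topology, and upper bounded on $\fS_{H,E}$. Hence $f$ fits either of
the two alternatives of the hypothesis of Proposition~\ref{pro:supremum-pure}.

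Applying that proposition then yields
\begin{equation*}
\sup_{\rho\in\fS_{H,E}}\Tr A\rho A^{*}
=\sup_{\varphi\in\cH_{E}}\Tr\bigl(A\ket{\varphi}\!\!\bra{\varphi}A^{*}\bigr).
\end{equation*}
The cyclicity of the trace gives
$\Tr\bigl(A\ket{\varphi}\!\!\bra{\varphi}A^{*}\bigr)=\braket{A^{*}A\varphi|\varphi}=\|A\varphi\|^{2}$,
so the right-hand side equals $\sup_{\varphi\in\cH_{E}}\|A\varphi\|^{2}$. Since the
two suprema agree and are nonnegative, taking square roots produces exactly the equality
between \eqref{ec-on} and \eqref{ec-on-b}. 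There is no real obstacle here; the only
point to be careful about is that the continuity and boundedness of $\rho\mapsto\Tr A\rho A^{*}$
on $\fS_{H,E}$ place it within the scope of Proposition~\ref{pro:supremum-pure}, which
follows immediately from $A$ being bounded.
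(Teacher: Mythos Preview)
Your proof is correct and follows exactly the approach indicated in the paper: apply Proposition~\ref{pro:supremum-pure} to the continuous affine function $f(\rho)=\Tr A\rho A^{*}$, then take square roots. The paper states this application without spelling out the verification that $f$ is bounded and continuous, but your justification of these points is accurate.
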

\noindent
{\footnotesize
Acknowledgements. 
The first author thanks M.\,R.~Galarza as well as M.\,M. Weis and J.~Weis for hosting 
him while working on this project. The second author is grateful to A.\,S.~Holevo and 
G.\,G.~Amosov for useful discussions.} 
%
%%%%%%%%%%%%%%%%%%%%%%%%%%%%%%%%%%%%%%%%%%%%%%%%%%%%%%%%%%%%%%%%%%%%%%%%%%%%
%%%%%%%%%%%%%%%%%%%%%%%%%%%%%%%%%%%%%%%%%%%%%%%%%%%%%%%%%%%%%%%%%%%%%%%%%%%%
%%%%%%%%%%%%%%%%%%%%%%%%%%%%%%%%%%%%%%%%%%%%%%%%%%%%%%%%%%%%%%%%%%%%%%%%%%%%
%%%%%%%%%%%%%%%%%%%%%%%%%%%%%%%%%%%%%%%%%%%%%%%%%%%%%%%%%%%%%%%%%%%%%%%%%%%%
%%%%%%%%%%%%%%%%%%%%%%%%%%%%%%%%%%%%%%%%%%%%%%%%%%%%%%%%%%%%%%%%%%%%%%%%%%%%
%
%
\bibliographystyle{plain}

%
%
%%%%%%%%%%%%%%%%%%%%%%%%%%%%%%%%%%%%%%%%%%%%%%%%%%%%%%%%%%%%%%%%%%%%%%%%%%%%
%%%%%%%%%%%%%%%%%%%%%%%%%%%%%%%%%%%%%%%%%%%%%%%%%%%%%%%%%%%%%%%%%%%%%%%%%%%%
%%%%%%%%%%%%%%%%%%%%%%%%%%%%%%%%%%%%%%%%%%%%%%%%%%%%%%%%%%%%%%%%%%%%%%%%%%%%
%%%%%%%%%%%%%%%%%%%%%%%%%%%%%%%%%%%%%%%%%%%%%%%%%%%%%%%%%%%%%%%%%%%%%%%%%%%%
%%%%%%%%%%%%%%%%%%%%%%%%%%%%%%%%%%%%%%%%%%%%%%%%%%%%%%%%%%%%%%%%%%%%%%%%%%%%
%
\vspace{1cm}
\parbox{10cm}{%
Stephan Weis\\
Theisenort 6\\
96231 Bad Staffelstein\\
Germany\\
e-mail \texttt{maths@weis-stephan.de}}
\vspace{\baselineskip}
\par\noindent
\parbox{10cm}{%
Maksim Shirokov\\
Steklov Mathematical Institute\\
Moscow\\
Russia\\
e-mail \texttt{msh@mi.ras.ru}}

\end{document}